\newtheorem{prethm}{{\bf Theorem}}
\newenvironment{thm}{\begin{prethm}{\hspace{-0.5
               em}{\bf.}}}{\end{prethm}}
\newtheorem{prepro}[prethm]{Proposition}
\newtheorem{prelem}[prethm]{Lemma}
\newenvironment{lem}{\begin{prelem}{\hspace{-0.5
               em}{\bf.}}}{\end{prelem}}
\newtheorem{precor}[prethm]{Corollary}
\newtheorem{preremark}{{\bf Remark}}
\newtheorem{preexample}{{\bf Example}}
\newtheorem{preproof}{{\bf Proof.}}
\newenvironment{proof}[1]{\begin{preproof}{\rm
               #1}\hfill{$\Box$}}{\end{preproof}}
\title{\bf\ On z-factorization and c-factorization of standard episturmian words}
\author{{\normalsize{ M. Mohammad-Noori${}^{ \textrm{a}, \,1}$}, { N. Ghareghani${}^{ \textrm{b, c}}$}, {  P. Sharifani ${}^{ \textrm{a}}$}\,
}\vspace{2mm} \\{\footnotesize{$^{ \textrm{a}}$\it
Department of
Mathematics, Statistics and Computer
Science, University of Tehran,}}\vspace{-2mm}\\{\footnotesize{\it   Tehran,
Iran}}\\
{\footnotesize{$^{ \textrm{b}}$\it Department of Mathematical Sciences, K.N. Toosi University of Technology}}\vspace{-2mm}\\
{\footnotesize{$^{ \textrm{c}}$\it School of Mathematics, Institute for Research in
Fundamental Sciences {\rm(IPM),}}}\vspace{-2mm}\\{\footnotesize{\it P.O.Box: 19395-5746, Tehran,
Iran}}
\\{\footnotesize Emails: morteza@ipm.ir, mnoori@khayam.ut.ac.ir},\\
 {\footnotesize ghareghani@ipm.ir, Psharifani@khayam.ut.ac.ir}
 }
\begin{document}

\maketitle \footnotetext[1]{\tt Corresponding author}


\begin{abstract}
\noindent Ziv-Lempel and Crochemore factorization are two kinds of factorizations of words
related to text processing. In this paper, we find these factorizations for standard epiesturmian words.
Thus the previously known c-factorization of standard Sturmian words is provided as a special case.
Moreover, the two factorizations are compared.
\end{abstract}
\vspace{3mm}
\noindent{\em Keywords}: Ziv-Lempel factorization; Crochemore factorization; standard episturmian words

\section{Introduction}
Some factorizations of finite words are studied by Ziv and Lemplel in a seminal paper \cite{ZivLempel}. These factorizations are related to information theory and text processing. Several years later, Crochemore introduced another factorization of words for the design of a linear time algorithm to detect squares in a word \cite{croch, crochHan, crochRyt}. The Ziv-Lempel and Crochemore factorizations seem to be similar in some cases but significantly different in some other examples.

 In \cite{Bersavelli}, Crochemore factorizations of some of well-known infinite words, namely characteristic
 Sturmian words and (generalized) Thue-Morse words and the period doubling sequence, are explicitly given based
 on their combinatorial structures. Also, they have shown that in general, the number of factors in the Crochemore
 factorization is at most twice the number of factors of the Ziv-Lempel factorization.\\

The Crochemore factorization (or c-factorization for short) of a
word ${\bf w}$ is defined as follows: Each factor of $c({\bf w})$ is
either a fresh letter, or it is a maximal factor of ${\bf w}$, which
has already occurred in the prefix of the word. More formally, the
c-factorization $c({\bf w})$ of a word ${\bf w}$ is
$$c({\bf w})=(c_1,\cdots,c_{m},c_{m+1},\cdots),$$
where $c_m$ is the longest prefix of $c_m c_{m+1}\cdots$ occurring twice in $ c_1 \cdots c_m$, or $c_m$ is
a letter $a$ which has not occurred in $c_1\cdots c_{m-1}$.\\

The Ziv-Lempel factorization (or z-factorization for short) of a word ${\bf w}$ is
$$z({\bf w})=(z_1,\cdots,z_{m},z_{m+1},\cdots),$$
where $z_m$ is the shortest prefix of $z_m z_{m+1}\cdots$ which
occurs only once in the word $z_1\cdots z_m$.
In this paper, we give explicit formulas for z-factorization and c-factorization of standard episturmian words
, thus we obtain the previous c-factorization of standard Sturmian words in \cite{Bersavelli} as a special case. Moreover, these results reveal the relation between two factorizations in the case of standard episturmian words.
The rest of the paper is organized as follows. In Section 2 we present some useful definitions and notation of combinatorics on words. Section 3 is devoted to review the definition and some properties of episturmian words.
In Section 4, we study z-factorization of standard episturmian words. Finally in Section 5 we present a result about the c-factorization of standard episturmian words.

\section{Definitions and notation}
We denote the alphabet (which is finite) by ${\mathcal A}$. As usual, we denote by ${\mathcal A}^*$, the set of words over ${\mathcal A}$ and by $\epsilon$ the empty word. We use the notation ${\mathcal A}^+ ={\mathcal A}^* \setminus \{\epsilon\}$.
If $a\in {\mathcal A}$ and $w=w_1w_2\ldots w_n$ is a word over ${\mathcal A}$ with the $w_i\in {\mathcal A}$, then
the symbols $|w|$ and ${|w|}_a$ denote respectively the length $n$ of $w$, and the number of occurrences of letter $a$ in $w$. For an infinite word ${\bf w}$ we denote by $Alph({\bf w})$ (resp. $Ult({\bf w})$) the number of letters which appear (resp. appear infinitely many times) in ${\bf w}$ (The first notation is also used for finite words).
A word $v$ is a factor of a word $w$, written $v \prec w$, if there exists $u,u'\in {\mathcal A}^*$, such that $w=uvu'$. A word $v$ is said to be a prefix (resp. suffix) of a word $w$, written
$v\lhd w$ (resp. $v \rhd w$), if there exists $u\in {\mathcal A}^*$ such that $w=vu$ (resp. $w=uv$). If $w=vu$ (resp. $w=uv$,) we simply write $v=wu^{-1}$ (resp. $v=u^{-1}w$).
The notations of prefix and factor extend naturally to infinite words. Two words $u$ and $v$ are {\it conjugate} if
there exist words $p$ and $q$ such that $u=pq$ and $v=qp$. For a word $w$, the set
$F(w)$ (resp. $F_n(w)$) is the set of its factors (resp. the set of its factors of length $n$); these notations are also used for infinite words. If ${\bf w}$ is an infinite word, then the related complexity function, is $p_{{\bf w}}(n)=|F_n({\bf w})|$. The {\it reversal} of $w=w_1 w_2 \ldots w_{n}$ is $\overline{w}=w_n w_{n-1}\ldots w_1$. The
word $w$ is a {\it palindrome} if $w=\overline{w}$. A word $w\in {\mathcal A}^+$ is called primitive if $m\in \mathbb{N}_{+}$ and $w=u^m$ implies $m=1$.

\section{Episturmian words}
An infinite word ${\bf s}$ is episturmian if $F({\bf s})$ is closed under reversal and for any $\ell \in {\mathbb{N}}$
there exists at most one right special word in $F_{\ell}({\bf s})$. Then Sturmian words are just nonperiodic
episturmian words on a binary alphabet. An episturmian word is {\it standard} if all its left special factors are prefixes of it. It is well-known that if an episturmian word ${\bf t}$ is not periodic and $Ult({\bf t})=k$, then its complexity function is ultimately $p_{\bf t}(n)=(k-1)n+q$ for some $q \in {\mathbb N}_{+}$.
Let ${\bf t}$ be an episturmian word. If ${\bf t}$ is nonperiodic then there exists a unique standard episturmian word ${\bf s}$ satisfying $F_{\bf t}=F_{\bf s}$; If ${\bf t}$ is periodic then we may find several standard episturmian words {\bf s} satisfying $F_{\bf t}=F_{\bf s}$. In any case, there exists at least one standard episturmian word
{\bf s} with $F_{\bf t}=F_{\bf s}$. If the sequence of palindromic prefixes of a standard episturmian word ${\bf s}$ is $u_1=\epsilon,u_2,u_3,\cdots$, then there exists an infinite word $\Delta({\bf s})=x_1 x_2 \cdots$, \, $x_i\in {\mathcal{A}}$ called its {\it directive word} such that for all $n\in {\mathbb{N}}_{+}$,
$$ u_{n+1}=(u_n x_n)^{(+)}$$
where $w^{(+)}$ is defined as the shortest palindrome having $w$ as a prefix.
(Similar construction for Sturmian words can be found in \cite{Luca}.)
The relation between $u_n$ and $u_{n+1}$
 can also be explained using morphisms: For $a\in \mathcal{A}$ define the morphism $\psi_a$ by
 $\psi_a(a)=a$, and $\psi_a(x)=ax$ for $x\in \mathcal{A}\setminus \{a\}$. Let $\mu_0=Id$ and $\mu_n=\psi_{x_1} \psi_{x_2}\cdots \psi_{x_n}$ for $n \in {\mathbb{N}}_{+}$. Moreover, let $h_n=\mu_n(x_{n+1})$. Then
 $$u_{n+1}=h_{n-1}u_n,\, n\in {\mathbb{N}}_{+}$$
 From the above equation, it is concluded that
 \begin{equation} \label{uhhh1} u_{n+1}=h_{n-1}\cdots h_1 h_0=\overline{h_0}\, \overline{h_1} \cdots \overline{h_{n-1}}  \end{equation}

 It is known that for any integer $n$, $h_n$ is primitive (See Proposition 2.8 of \cite{juspiri}) and
 so is $\overline{h_n}$. For any integer $n$ define $P(n)$ as the maximum value of $i$ satisfying $i<n$ and
 $x_i=x_n$; if there is no such $i$ then $P(n)$ is undefined. We have the following Lemma.

 \begin{lem}\label{Pn}
 \begin{enumerate}
 \item[\rm(i)]$$h_{n-1}=\left\{
  \begin{array}{ll}
    {u_n x_n} & \hbox{if $P(n)$ is undefined,} \\
    {u_n u_{P(n)}^{-1}} & \hbox{otherwise.}
  \end{array}
\right.$$

  \item[\rm(ii)] If $P(n)$ is defined then
  $$h_{n-1}=h_{n-2} h_{n-3}\cdots h_{P(n)-1}.$$
\end{enumerate}
\end{lem}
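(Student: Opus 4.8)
The plan is to prove both parts at once by studying the word $\mu_m(a)$ for an arbitrary letter $a\in\mathcal{A}$, and then specializing to $a=x_n$ and $m=n-1$, since $h_{n-1}=\mu_{n-1}(x_n)$ by definition. First I would record a one-step recursion for $\mu_m(a)$. Writing $\mu_m=\mu_{m-1}\psi_{x_m}$ and using the definition of $\psi_{x_m}$, for any letter $a$ one has $\mu_m(a)=\mu_{m-1}(\psi_{x_m}(a))$, which equals $\mu_{m-1}(a)$ when $a=x_m$ (because $\psi_{x_m}(x_m)=x_m$), and equals $\mu_{m-1}(x_m)\mu_{m-1}(a)=h_{m-1}\,\mu_{m-1}(a)$ when $a\neq x_m$ (because $\psi_{x_m}(a)=x_m a$ and $h_{m-1}=\mu_{m-1}(x_m)$). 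The base case is $\mu_0(a)=a$.

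Next I would unfold this recursion with $a=x_n$, starting from $m=n-1$ and decreasing $m$. By the very definition of $P(n)$ as the largest index $i<n$ with $x_i=x_n$, the letter $x_n$ does not occur among $x_{P(n)+1},\dots,x_{n-1}$ when $P(n)$ is defined, so every step from $m=n-1$ down to $m=P(n)+1$ falls into the branch $a\neq x_m$ and contributes a factor $h_{m-1}$ on the left. If $P(n)$ is undefined, then $x_n$ never occurs among $x_1,\dots,x_{n-1}$, the recursion unfolds completely, and one gets $h_{n-1}=h_{n-2}h_{n-3}\cdots h_0\,\mu_0(x_n)=(h_{n-2}\cdots h_0)\,x_n$; since $u_n=h_{n-2}\cdots h_0$ by the product formula (\ref{uhhh1}), this is exactly $u_n x_n$, which is the first case of (i). If instead $P(n)=p$ is defined, the unfolding halts at the step $m=p$, where $x_p=x_n$ forces the branch $a=x_m$ and yields $\mu_p(x_n)=\mu_{p-1}(x_n)=\mu_{p-1}(x_p)=h_{p-1}$; hence $h_{n-1}=h_{n-2}h_{n-3}\cdots h_{p-1}$, which is precisely statement (ii).

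Finally I would deduce the second case of (i) from (ii) together with (\ref{uhhh1}). Since $u_n=h_{n-2}\cdots h_0$ and $u_p=h_{p-2}\cdots h_0$, the word $u_p$ is a suffix of $u_n$, and removing it leaves $u_n u_p^{-1}=h_{n-2}\cdots h_{p-1}$, which by (ii) equals $h_{n-1}$. This gives $h_{n-1}=u_n u_p^{-1}$ and completes (i). Note that this route proves the palindromic formulas entirely through the morphisms $\psi_a$ and $\mu_m$, bypassing any direct analysis of iterated palindromic closure.

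The step needing the most care is the index bookkeeping in the unfolding: one must check that $P(n)$ being the \emph{most recent} earlier occurrence of $x_n$ is exactly what guarantees that each intermediate step takes the $a\neq x_m$ branch, so that no factor is lost or duplicated, and that the halting step correctly produces $h_{p-1}$ rather than, say, $h_p$. I would also verify the boundary cases separately: $p=n-1$ (where $x_{n-1}=x_n$ and the product (ii) collapses to the single factor $h_{n-2}$, consistent with $\mu_{n-1}(x_n)=\mu_{n-2}(x_n)=h_{n-2}$) and $p=1$ (where $u_p=\epsilon$, so the formula in (i) reads $h_{n-1}=u_n$, matching the full product $h_{n-2}\cdots h_0$).
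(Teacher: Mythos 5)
Your proof is correct, but it takes a genuinely different route from the paper. The paper disposes of part (i) by citing the end of Section~2.1 of the Justin--Pirillo paper \cite{juspiri}, and then derives part (ii) from part (i) together with the product formula (\ref{uhhh1}); your argument runs in the opposite logical direction and is entirely self-contained. You establish the one-step recursion $\mu_m(a)=\mu_{m-1}(a)$ if $a=x_m$ and $\mu_m(a)=h_{m-1}\mu_{m-1}(a)$ otherwise, unfold it for $a=x_n$, and the definition of $P(n)$ as the most recent earlier occurrence of $x_n$ tells you exactly where the unfolding stops: if $P(n)$ is undefined it runs all the way down to $\mu_0(x_n)=x_n$, giving the first case of (i) via $u_n=h_{n-2}\cdots h_0$; if $P(n)=p$ is defined it halts at $\mu_p(x_n)=\mu_{p-1}(x_p)=h_{p-1}$, giving (ii) directly, and then the second case of (i) follows by cancelling $u_p=h_{p-2}\cdots h_0$ against the tail of $u_n$. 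Your index bookkeeping and boundary cases ($p=n-1$, $p=1$) check out, and the only external input you need, equation (\ref{uhhh1}), is established in the paper before the lemma, so there is no circularity. What the paper's approach buys is brevity, at the cost of sending the reader to \cite{juspiri} where (i) is obtained through the palindromic-closure machinery; what your approach buys is a transparent, purely morphism-theoretic derivation that explains both formulas as a telescoping of the composition $\psi_{x_1}\cdots\psi_{x_{n-1}}$, and it makes (ii) the primitive fact rather than a consequence of (i).
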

\begin{proof}{
 \begin{enumerate}
 \item[\rm(i)] See the end of Section 2.1 of \cite{juspiri}.
 \item[\rm(ii)] This is proved by using part (i) and (\ref{uhhh1}).
 \end{enumerate}
}
\end{proof}
 It is obvious that $h_{n-1}\lhd h_n$. In addition, by Proposition 2.11 of \cite{juspiri} we have
 \begin{lem}\label{hxPJ}
 \begin{enumerate}
 \item[\rm(i)] $h_n=h_{n-1}$ if and only if $x_{n+1}=x_n$.
 \item[\rm(ii)] If $x_{n+1}\neq x_n$ then $u_n$ is a proper prefix of $h_n$.
 \end{enumerate}
 \end{lem}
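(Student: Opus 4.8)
The plan is to settle (i) by a one-line morphism computation and to reduce (ii) to a single length inequality, which I then establish through an auxiliary induction.

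For (i), I would use the factorization $\mu_n=\mu_{n-1}\psi_{x_n}$ to write $h_n=\mu_n(x_{n+1})=\mu_{n-1}\bigl(\psi_{x_n}(x_{n+1})\bigr)$. If $x_{n+1}=x_n$ then $\psi_{x_n}(x_{n+1})=x_n$, so $h_n=\mu_{n-1}(x_n)=h_{n-1}$. If $x_{n+1}\neq x_n$ then $\psi_{x_n}(x_{n+1})=x_n x_{n+1}$, whence $h_n=\mu_{n-1}(x_n)\,\mu_{n-1}(x_{n+1})=h_{n-1}\,\mu_{n-1}(x_{n+1})$; since every $\mu_{n-1}(a)$ is a nonempty word, $|h_n|>|h_{n-1}|$ and in particular $h_n\neq h_{n-1}$. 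These two implications are exactly the two directions of (i).

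For (ii) assume $x_{n+1}\neq x_n$ and apply Lemma~\ref{Pn}(i) at index $n+1$. If $P(n+1)$ is undefined then $h_n=u_{n+1}x_{n+1}$, so $u_{n+1}\lhd h_n$; as $u_n\lhd u_{n+1}$ is a proper prefix (palindromic prefixes are nested), $u_n$ is a proper prefix of $h_n$. If $P(n+1)$ is defined, set $q=P(n+1)$; since $x_q=x_{n+1}\neq x_n$ we have $q\le n-1$, and the lemma gives $u_{n+1}=h_n u_q$, so $h_n\lhd u_{n+1}$. Together with $u_n\lhd u_{n+1}$ this exhibits $u_n$ and $h_n$ as two prefixes of the same word, so the desired statement ``$u_n\lhd h_n$ properly'' is equivalent to $|u_n|<|h_n|$. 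Combining $u_{n+1}=h_{n-1}u_n$ and $u_{n+1}=h_n u_q$, this inequality reduces precisely to $|u_q|<|h_{n-1}|$, which would follow from the stronger fact that $u_{P(n+1)}$ is a proper prefix of $\mu_{n-1}(x_n)=h_{n-1}$.

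The core of the proof is thus an auxiliary statement that I would prove by induction on $m$: \emph{if a letter $b$ occurs in $x_1\cdots x_m$ with last occurrence at position $j$, then for every letter $c\neq b$ the palindrome $u_j$ is a proper prefix of $\mu_m(c)$.} The base case $m=1$ is immediate, since $\mu_1(c)=x_1c$ and $u_1=\epsilon$. For the step I would again use $\mu_m(c)=\mu_{m-1}\bigl(\psi_{x_m}(c)\bigr)$ and split on whether $b=x_m$ and whether $c=x_m$. When $b=x_m$ one has $j=m$ and $\mu_m(c)=h_{m-1}\mu_{m-1}(c)$, and Lemma~\ref{Pn}(i) at index $m$ either yields $u_m\lhd h_{m-1}$ outright (if $P(m)$ is undefined) or, via $u_m=h_{m-1}u_{P(m)}$, reduces the goal to $u_{P(m)}\lhd\mu_{m-1}(c)$, which is the induction hypothesis for the letter $x_m$. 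When $b\neq x_m$ one has $j\le m-1$, and the two possible values $x_m$ and $x_m c$ of $\psi_{x_m}(c)$ reduce the goal to $u_j\lhd\mu_{m-1}(x_m)$, again the induction hypothesis (now with $c'=x_m\neq b$). Taking $m=n-1$, $b=x_{n+1}$, $c=x_n$ then gives $u_{P(n+1)}\lhd h_{n-1}$ properly, supplying the missing inequality and completing (ii).

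I expect the bookkeeping in this auxiliary induction to be the main obstacle: the case analysis on the pair $(b,c)$ relative to $x_m$ must be arranged so that each branch lands exactly on the induction hypothesis, and one must track carefully which occurrence is ``last'' when passing from $m$ to $m-1$ (this is precisely where the assumption $x_{n+1}\neq x_n$ is used, to force $q\le n-1$). Everything else is routine manipulation of the identities $u_{n+1}=h_{n-1}u_n$ and $\mu_n=\mu_{n-1}\psi_{x_n}$ together with Lemma~\ref{Pn}.
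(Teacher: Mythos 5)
Your argument is correct, but note that the paper itself offers no proof of this lemma: it is quoted verbatim from Proposition 2.11 of \cite{juspiri}, just as Lemma \ref{Pn}(i) is quoted from Section 2.1 of that same reference. Your proof is therefore a genuine self-contained derivation rather than a variant of anything in the paper. Part (i) via the factorization $\mu_n=\mu_{n-1}\psi_{x_n}$ and the non-erasingness of the morphisms is exactly right (the strict length increase when $x_{n+1}\neq x_n$ gives both directions at once). The reduction in (ii) is also sound: Lemma \ref{Pn}(i) applied at index $n+1$ shows that either $u_{n+1}\lhd h_n$ (when $P(n+1)$ is undefined, and then the claim is immediate) or $h_n\lhd u_{n+1}$, so that $u_n$ and $h_n$ are prefixes of a common word and properness reduces to the length inequality $|u_{P(n+1)}|<|h_{n-1}|$, where the hypothesis $x_{n+1}\neq x_n$ is correctly used to force $P(n+1)\leq n-1$. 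I checked your auxiliary induction (last occurrence of $b$ at position $j$ in $x_1\cdots x_m$ implies $u_j$ is a proper prefix of $\mu_m(c)$ for every $c\neq b$): the base case, the case $b=x_m$ via $u_m=h_{m-1}u_{P(m)}$, and the case $b\neq x_m$ split on whether $c=x_m$ all land on the induction hypothesis as claimed. What your route buys is self-containedness: the lemma is derived from the definition of $\psi_a$ and Lemma \ref{Pn}(i) alone, whereas the paper delegates it entirely to \cite{juspiri}. The trade-off is length, plus the residual reliance on Lemma \ref{Pn}(i), which in this paper is itself only cited; a fully self-contained treatment would need a proof of that identity as well.
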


\begin{lem}\label{hu}
Let $\Delta ({\bf s})= x_1 \ldots x_n, \ldots,\,$ $x_i\in \mathcal{A}$. Suppose that $x_n = \alpha$ and the letter
$\alpha$ has at least one appearance before $x_n$ in $\Delta (s)$. Then
 \begin{enumerate}
 \item[\rm(i)] $h_{n-1} \lhd u_n$ and $\overline{h_{n-1}}\rhd
 u_n$.
 \item[\rm(ii)] The word $v_{n-1}=u_n (\overline{h_{n-1}})^{-1}$ is palindrome.
  \item[\rm(iii)] $v_{n-1}\rhd u_{n-1}$ and $v_{n-1}\lhd u_{n-1}$.
  \item[\rm(iv)] $u_n\rhd u_{n-1}\overline{h_{n-1}}$.
  \item[\rm(v)] If moreover $x_n\neq x_{n-1}$, then $u_n\rhd (\overline{h_{n-1}})^2$ and $u_{n+1}\rhd (\overline{h_{n-1}})^3$.
 \end{enumerate}
\end{lem}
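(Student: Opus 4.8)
The plan is to reduce the whole lemma to the single identity $v_{n-1}=u_{P(n)}$, after which (ii)--(v) fall out by routine manipulation of prefixes, suffixes and reversals of palindromes. Throughout I write $p=P(n)$; this is defined precisely because $\alpha=x_n$ occurs earlier in $\Delta(\mathbf{s})$.

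I would start with (i). Since $p$ is defined, Lemma~\ref{Pn}(i) gives $h_{n-1}=u_n u_p^{-1}$, i.e. $u_n=h_{n-1}u_p$, so in particular $h_{n-1}\lhd u_n$. Because $u_n$ is a palindrome, reversing the factorization $u_n=h_{n-1}u_p$ and using $\overline{u_n}=u_n$ and $\overline{u_p}=u_p$ yields $u_n=u_p\,\overline{h_{n-1}}$, which is the statement $\overline{h_{n-1}}\rhd u_n$. This second form is exactly what makes $v_{n-1}=u_n(\overline{h_{n-1}})^{-1}$ well defined, and comparing $u_n=u_p\,\overline{h_{n-1}}$ with $u_n=v_{n-1}\overline{h_{n-1}}$ gives $v_{n-1}=u_p=u_{P(n)}$. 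Since $u_p$ is a palindromic prefix, (ii) is immediate.

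For (iii), note $p<n$, hence $p\le n-1$ and $u_p\lhd u_{n-1}$ because the palindromic prefixes are nested; this is $v_{n-1}\lhd u_{n-1}$. Reversing this prefix relation and using that $u_{n-1}$ and $u_p$ are palindromes turns it into the suffix relation $u_p\rhd u_{n-1}$, i.e. $v_{n-1}\rhd u_{n-1}$. Part (iv) then follows at once: by (iii) we may write $u_{n-1}=w\,u_p$, and substituting $u_p\,\overline{h_{n-1}}=u_n$ from (ii) gives $u_{n-1}\overline{h_{n-1}}=w\,u_n$, so $u_n\rhd u_{n-1}\overline{h_{n-1}}$.

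The only place the extra hypothesis $x_n\neq x_{n-1}$ is used is (v), which I expect to be the crux. Re-indexing Lemma~\ref{hxPJ}(ii) shows that $x_n\neq x_{n-1}$ forces $u_{n-1}$ to be a proper prefix of $h_{n-1}$; combined with $u_p\lhd u_{n-1}$ from (iii) this gives $u_p\lhd h_{n-1}$, and reversing yields $u_p\rhd\overline{h_{n-1}}$. I would then use the elementary fact that if $s$ is a suffix of $t$ then $s\,t^{k}$ is a suffix of $t^{k+1}$ for every $k\ge1$. Applying it with $s=u_p$, $t=\overline{h_{n-1}}$ and $k=1$ to $u_n=u_p\,\overline{h_{n-1}}$ gives $u_n\rhd(\overline{h_{n-1}})^2$; and since $u_{n+1}=h_{n-1}u_n$ together with palindromicity of $u_{n+1}$ gives $u_{n+1}=u_n\overline{h_{n-1}}=u_p(\overline{h_{n-1}})^2$, the same fact with $k=2$ yields $u_{n+1}\rhd(\overline{h_{n-1}})^3$. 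The point needing the most care is that these suffix-of-a-power relations genuinely hold, i.e. that $|u_p|\le|\overline{h_{n-1}}|$, and this is exactly what $u_p\lhd h_{n-1}$ guarantees.
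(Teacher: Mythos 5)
Your proof is correct, and while it follows the paper's skeleton for parts (i), (iv) and (v), it handles the core parts (ii) and (iii) by a genuinely different device: the explicit identification $v_{n-1}=u_{P(n)}$. The paper never names $v_{n-1}$; instead it proves (ii) by writing $u_{n+1}=h_{n-1}v_{n-1}\overline{h_{n-1}}$ and extracting palindromicity of $v_{n-1}$ from palindromicity of $u_{n+1}$ (a cancellation argument), and it proves (iii) by comparing the two suffix decompositions $u_n=v_{n-1}\overline{h_{n-1}}$ and $u_n=u_{n-1}\overline{h_{n-2}}$ together with the length inequality $|h_{n-1}|\geq|h_{n-2}|$ coming from $h_{n-2}\lhd h_{n-1}$. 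Your route instead compares $u_n=u_{P(n)}\overline{h_{n-1}}$ (the reversal of Lemma~\ref{Pn}(i)) with the definition of $v_{n-1}$, so (ii) is immediate because $u_{P(n)}$ is a palindromic prefix, and (iii) reduces to the nesting $u_{P(n)}\lhd u_{n-1}$ of the palindromic prefix sequence, which holds since $P(n)\le n-1$ and $u_m\lhd u_{m+1}$ for all $m$. What your approach buys is more information --- you know exactly which word $v_{n-1}$ is, not merely that it is a palindrome shorter than $u_{n-1}$ --- and it avoids the length comparison of the $h_m$'s entirely; what the paper's approach buys is independence from the identification, so the same argument would survive even in settings where Lemma~\ref{Pn}(i) gives a less explicit handle on $h_{n-1}$. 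Your treatment of (v) is the same as the paper's (via Lemma~\ref{hxPJ}(ii) re-indexed and part (iii)), but you spell out the suffix-of-a-power bookkeeping that the paper leaves implicit, including the length check $|u_{P(n)}|\le|h_{n-1}|$, which is a welcome addition.
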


\begin{proof}{
 \begin{enumerate}
 \item[\rm(i)] By Lemma \ref{Pn}(i), $h_{n-1}=u_n u_{P(n)}^{-1}$. So $h_{n-1} \lhd u_n$, which concludes
 $\overline{h_{n-1}}\rhd
 \overline{u_n}=u_n$.
 \item[\rm(ii)] By part (i), there exists a word $v_{n-1}$ satisfying $u_n=v_{n-1}\overline{h_{n-1}}$. Hence by
 $u_{n+1}=h_{n-1}u_n$, we obtain $u_{n+1}=h_{n-1}v_{n-1}\overline{h_{n-1}}$. But since $u_{n+1}$ is palindromic, from the last equation we conclude that so is $v_{n-1}$.
 \item[\rm(iii)] From $u_n=v_{n-1}\overline{h_{n-1}}=u_{n-1}h_{n-2}$ and $|h_{n-1}|\geq |h_{n-2}|$ we conclude that $v_{n-1}\lhd u_{n-1}$, which yields $v_{n-1}=\overline{v_{n-1}}\rhd \overline{u_{n-1}}=u_{n-1}$.
 \item[\rm(iv)] This is concluded from $u_n=v_{n-1}\overline{h_{n-1}}$ using part (iii).
 \item[\rm(v)] Using part (iii) and Lemma \ref{hxPJ}(ii), we get $u_n\rhd (\overline{h_{n-1}})^2$; combining this with $u_{n+1}=u_n \overline{h_{n-1}}$, we provide $u_{n+1}\rhd (\overline{h_{n-1}})^3$.
 \end{enumerate}
}\end{proof}

The following representation of directive word is useful for next sections. Let
$$ \Delta({\bf s})=x_1 x_2 \cdots= y_1^{d_1} y_2^{d_2} \cdots\, ,$$

where $x_i, y_i \in {\mathcal A}$, $y_i\neq y_{i+1}$ and $d_i>0$ for
$i>0$. Define the function $g:{\mathbb N} \rightarrow {\mathbb N}$
by
$$g(m)=d_1+\cdots+d_{m-1}+1$$

\begin{lem}\label{deltaY}
With the above definitions, the following statements hold.
 \begin{enumerate}
 \item[\rm(i)] $u_{g(m+1)}=(h_{g(m)-1})^{d_m}u_{g(m)}=u_{g(m)} (\overline {h_{g(m)-1}})^{d_m}.$

 \item[\rm(ii)] $u_{g(m+1)}=(h_{g(m)-1})^{d_m} (h_{g(m-1)-1})^{d_{m-1}}\cdots (h_0)^{d_1}= (\overline{h_0})^{d_1} (\overline{h_1})^{d_2}\cdots (\overline{h_{g(m)-1}})^{d_m}.$
  \item[\rm(iii)] $u_{g(m)-1}$ is a proper prefix of $h_{g(m)-1}$.
  \item[\rm(iv)]  $u_{g(m)}\rhd (\overline{h_{g(m)-1}})^2$ and $u_{g(m+1)}\rhd (\overline{h_{g(m)-1}})^{d_m+2}$.
 \end{enumerate}
\end{lem}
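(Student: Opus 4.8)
The plan is to handle the four parts in order, the whole argument being driven by a single observation: within a maximal block of equal directive letters the words $h_n$ remain constant. Since positions $g(m),\ldots,g(m+1)-1$ of $\Delta(\mathbf{s})$ all carry the letter $y_m$, applying Lemma~\ref{hxPJ}(i) (which states $h_n=h_{n-1}$ exactly when $x_{n+1}=x_n$) successively for $n=g(m),\ldots,g(m+1)-2$ yields the chain $h_{g(m)-1}=h_{g(m)}=\cdots=h_{g(m+1)-2}$, a block of $d_m$ equal words.

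For part (i) I would iterate the recurrence $u_{n+1}=h_{n-1}u_n$ for $n=g(m),g(m)+1,\ldots,g(m+1)-1$. At each step the prepended factor is one of $h_{g(m)-1},\ldots,h_{g(m+1)-2}$, all equal to $h_{g(m)-1}$ by the remark above, so after $d_m=g(m+1)-g(m)$ steps one gets $u_{g(m+1)}=(h_{g(m)-1})^{d_m}u_{g(m)}$. The second form follows by taking reversals and using that every $u_k$ is a palindrome, so that $\overline{(h_{g(m)-1})^{d_m}u_{g(m)}}=u_{g(m)}(\overline{h_{g(m)-1}})^{d_m}$. Part (ii) is then obtained by telescoping part (i) downward in $m$, repeatedly replacing $u_{g(k)}$ by $(h_{g(k-1)-1})^{d_{k-1}}u_{g(k-1)}$ until reaching $u_{g(1)}=u_1=\epsilon$ and $h_{g(1)-1}=h_0$; the reversed expression again comes from the palindrome property.

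Part (iii) I would deduce from Lemma~\ref{hxPJ}(ii). The letter at position $g(m)-1$ is the last letter $y_{m-1}$ of the $(m-1)$-st block and the letter at position $g(m)$ is $y_m$, and $y_{m-1}\neq y_m$ by definition of the block decomposition; hence $x_{n+1}\neq x_n$ for $n=g(m)-1$, and Lemma~\ref{hxPJ}(ii) gives that $u_{g(m)-1}$ is a proper prefix of $h_{g(m)-1}$ (read for $m\ge 2$, so that $u_{g(m)-1}$ is defined).

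The last part is where the real work lies. For the first assertion of (iv), $u_{g(m)}\rhd(\overline{h_{g(m)-1}})^2$, the natural tool is Lemma~\ref{hu}(v) applied at $n=g(m)$: the hypothesis $x_n\neq x_{n-1}$ holds since $y_m\neq y_{m-1}$, so when $y_m$ has already occurred before position $g(m)$ the standing hypothesis of Lemma~\ref{hu} is met and the claim is immediate. The subtle point, which I expect to be the main obstacle, is the remaining case in which $y_m$ occurs for the first time at $g(m)$: then $P(g(m))$ is undefined and Lemma~\ref{hu} does not apply. Here I would argue directly from Lemma~\ref{Pn}(i), which gives $h_{g(m)-1}=u_{g(m)}x_{g(m)}=u_{g(m)}y_m$; reversing and using that $u_{g(m)}$ is a palindrome yields $\overline{h_{g(m)-1}}=y_m u_{g(m)}$, so that $(\overline{h_{g(m)-1}})^2=y_m u_{g(m)} y_m u_{g(m)}$ plainly ends in $u_{g(m)}$. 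In either case the first assertion holds. The second assertion then needs no new idea: writing $H=\overline{h_{g(m)-1}}$, part (i) gives $u_{g(m+1)}=u_{g(m)}H^{d_m}$, and since $u_{g(m)}$ is a suffix of $H^2$, appending $H^{d_m}$ on the right shows that $u_{g(m+1)}$ is a suffix of $H^{d_m+2}$, i.e. $u_{g(m+1)}\rhd(\overline{h_{g(m)-1}})^{d_m+2}$.
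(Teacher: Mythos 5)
Your proof is correct, and for parts (i)--(iii) it follows exactly the paper's route: block-constancy of the $h_n$ via Lemma~\ref{hxPJ}(i), iteration of $u_{n+1}=h_{n-1}u_n$ (the paper invokes equation~(\ref{uhhh1}) for (ii) where you telescope (i), an equivalent computation), and Lemma~\ref{hxPJ}(ii) for (iii). The one substantive difference is in part (iv), and it is to your credit. The paper's proof consists of the single citation of Lemma~\ref{hu}(v); but Lemma~\ref{hu} carries the standing hypothesis that the letter $x_n$ has already occurred earlier in $\Delta(\mathbf{s})$, which can fail when $y_m$ makes its first appearance at position $g(m)$ --- a case the paper silently skips (in its later applications of (iv), inside Theorems~\ref{zFactoEpi} and~\ref{cFactoEpi}, the hypothesis happens to hold, which is perhaps why the omission went unnoticed). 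You identify this precisely and close it with the right argument: when $P(g(m))$ is undefined, Lemma~\ref{Pn}(i) gives $h_{g(m)-1}=u_{g(m)}y_m$, hence $\overline{h_{g(m)-1}}=y_m u_{g(m)}$ by palindromicity, and $(\overline{h_{g(m)-1}})^2$ visibly ends in $u_{g(m)}$. So your write-up proves the lemma unconditionally as stated, whereas the paper's own proof covers only the recurring-letter case; the derivation of the second assertion of (iv) from the first is the same in both.
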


\begin{proof}{
 \begin{enumerate}
 \item[\rm(i)] For any integer $n$ with $g(m)\leq n<g(m+1)$ we have $x_n=y_m$ and by Lemma \ref{hxPJ}(i), $h_{n-1}=h_{g(m)-1}$. Thus for any integer $j$ with $0\leq j\leq d_m$ we have
 $$u_{g(m)+j}=(h_{g(m)-1})^j u_{g(m)}=u_{g(m)}(\overline{h_{g(m)-1}})^j.$$
Particularly for $j=d_m$ the result is provided.
\item[\rm(ii)] This is concluded from (\ref{uhhh1}).
\item[\rm(iii)] This is obtained from Lemma \ref{hxPJ}(ii).
\item[\rm(iv)] By Lemma \ref{hu}(v) we obtain $u_{g(m)}\rhd (\overline{h_{g(m)-1}})^2$; Using this and $u_{g(m)+1}=u_{g(m)}(\overline{h_{g(m)-1}})^{d_m}$, we provide $u_{g(m+1)}\rhd (\overline{h_{g(m)-1}})^{d_m+2}$.
\end{enumerate}
}
\end{proof}

\section{z-factorization}

\begin{thm}\label{zFactoEpi} Let ${\bf s}$ be an episturmian word with directive word
$\Delta ({\bf s})=  x_1 x_2 x_3 \ldots = y_1^{d_1} y_2^{d_2}
\ldots$, where $y_{i}\neq y_{i+1}$, for all $i\geq 1$. The
z-factorization of ${\bf s}$ is of the form $z({\bf s})=(z_1, z_2,
\ldots)$, where $z_1=x_1$ and
$z_k=y_{k-1}^{-1}(\overline{h_{g(k-1)-1}})^{d_{k-1}}y_k$ for $k\geq
2$.
\end{thm}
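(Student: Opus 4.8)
The plan is to verify directly that the word given by the claimed formula satisfies the defining property of the z-factorization. The first step is to prove the auxiliary identity
$$z_1 z_2 \cdots z_k = u_{g(k)}\,y_k \qquad (k\ge 1)$$
by induction on $k$. The base case reads $u_{g(1)}y_1=u_1y_1=y_1=x_1=z_1$. For the step, assuming $z_1\cdots z_{k-1}=u_{g(k-1)}y_{k-1}$, appending $z_k=y_{k-1}^{-1}(\overline{h_{g(k-1)-1}})^{d_{k-1}}y_k$ cancels the trailing $y_{k-1}$ and, by Lemma~\ref{deltaY}(i) in the form $u_{g(k)}=u_{g(k-1)}(\overline{h_{g(k-1)-1}})^{d_{k-1}}$, yields $u_{g(k)}y_k$. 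Here one must first check that the left division $y_{k-1}^{-1}(\cdots)$ is legitimate, i.e. that $\overline{h_{g(k-1)-1}}$ begins with $y_{k-1}$. This holds because each $\psi_a$ preserves the last letter of a word, so $h_{g(k-1)-1}=\mu_{g(k-1)-1}(x_{g(k-1)})$ ends with $x_{g(k-1)}=y_{k-1}$, and hence its reversal begins with $y_{k-1}$.

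Writing $z_k^-=z_k y_k^{-1}=y_{k-1}^{-1}(\overline{h_{g(k-1)-1}})^{d_{k-1}}$ for $z_k$ with its last letter deleted, the definition of the z-factorization reduces, in view of the identity above, to two assertions about the prefix $z_1\cdots z_k=u_{g(k)}y_k$: \textbf{(a)} $z_k$ occurs only once (namely as a suffix) in $u_{g(k)}y_k$; and \textbf{(b)} $z_k^-$ occurs at least twice in $z_1\cdots z_{k-1}z_k^-=u_{g(k)}$. Statement (b), applied to the longest proper prefix, forces every shorter prefix of the tail to occur at least twice as well (a non-terminal occurrence of $z_k^-$ gives a non-terminal occurrence of each of its prefixes), so (a) and (b) together say precisely that $z_k$ is the shortest prefix of $z_kz_{k+1}\cdots$ occurring only once.

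For (b) I would exploit that $u_{g(k)}$ ends in a high power of the primitive word $\overline h:=\overline{h_{g(k-1)-1}}$: by Lemma~\ref{deltaY}(iv) one has $u_{g(k)}\rhd(\overline h)^{d_{k-1}+2}$ (the boundary case in which $y_{k-1}$ appears for the first time is checked directly, where $u_{g(k)}$ is itself a single period-$|\overline h|$ run). Writing $\overline h=y_{k-1}t$, the word $z_k^-=t(y_{k-1}t)^{d_{k-1}-1}$ has period $|\overline h|$ and sits inside this power at the offsets $1$, $1+|\overline h|$, $1+2|\overline h|$; at least two of these occurrences lie inside $u_{g(k)}$, and the earlier one starts before the suffix position, which is exactly what (b) requires.

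The main obstacle is (a). When $y_k$ does not occur earlier in $\Delta(\mathbf s)$ the argument is immediate: $u_{g(k)}$ is produced by $\mu_{g(k)-1}$ and therefore uses only the letters $x_1,\dots,x_{g(k)-1}$, so $y_k$ does not occur in $u_{g(k)}$; since $z_k$ ends in $y_k$, its unique occurrence in $u_{g(k)}y_k$ is the suffix one. The substantial case is when $y_k$ has already appeared. Here I would work through the period structure of $z_k^-$: the letter continuing the period $|\overline h|$ to the right of $z_k^-$ is $y_{k-1}$ (because $y_{k-1}z_k^-=(\overline h)^{d_{k-1}}$, so $z_k^- y_{k-1}$ is again a run of the period), and $y_{k-1}\neq y_k$. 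Thus it suffices to show that every occurrence of $z_k^-$ in $u_{g(k)}$ other than the suffix one is immediately followed by $y_{k-1}$; equivalently, that the only maximal period-$|\overline h|$ run in $u_{g(k)}$ long enough to contain $z_k^-$ is the terminal run $(\overline h)^{d_{k-1}+2}$. I expect this to be the crux, and plan to extract it from the block decomposition $u_{g(k)}=(h_{g(k-1)-1})^{d_{k-1}}(h_{g(k-2)-1})^{d_{k-2}}\cdots(h_0)^{d_1}$ of Lemma~\ref{deltaY}(ii), using that every earlier block $h_{g(j)-1}$ with $j<k-1$ is a proper prefix of the primitive word $h=h_{g(k-1)-1}$, so no long period-$|h|$ repetition can arise away from the ends, while primitivity of $h$ forbids a misaligned near-power. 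It may well be cleaner to pass to the reversed, equivalent statement $y_k\,(h_{g(k-1)-1})^{d_{k-1}}y_{k-1}^{-1}\not\prec u_{g(k)}$, since $u_{g(k)}$ is a palindrome beginning with $(h_{g(k-1)-1})^{d_{k-1}}$, and there the forbidden left neighbour $y_k$ of the prefix power is the thing to rule out.
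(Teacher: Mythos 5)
Your overall skeleton matches the paper's proof: the telescoping identity $z_1\cdots z_k=u_{g(k)}y_k$, the reduction to your assertions (a) and (b) (which are exactly the paper's Fact 2 and Fact 1), the observation that (b) for the longest proper prefix settles all shorter prefixes, and your proof of (b) via the periodic structure at the end of $u_{g(k)}$ (with the first-appearance case of $y_{k-1}$ treated separately) are all sound and essentially the paper's argument.

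The genuine gap is assertion (a) in the main case, where $y_k$ has occurred before: there you offer only a plan (``I expect this to be the crux, and plan to extract it from the block decomposition\dots''), and that plan targets a statement --- that no long period-$|\overline{h_{g(k-1)-1}}|$ run occurs in $u_{g(k)}$ away from its end --- which you never establish. The reason you see a hard localization problem at all is that you are reading Lemma \ref{deltaY}(iv) in the wrong direction. In the paper's notation, $v\rhd w$ means $w=uv$, i.e.\ $v$ is a \emph{suffix of} $w$; so Lemma \ref{deltaY}(iv) asserts that $u_{g(k)}$ is a suffix of $(\overline{h_{g(k-1)-1}})^{d_{k-1}+2}$ --- the whole word $u_{g(k)}$ sits inside a power of the primitive word $\overline{h_{g(k-1)-1}}$ --- and not that this power is a suffix of $u_{g(k)}$ (in general the power is longer than $u_{g(k)}$, e.g.\ for the Fibonacci word). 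Hence there are no ``interior'' occurrences to rule out: any occurrence of $z_k^-y_k$ in $u_{g(k)}$ is automatically an occurrence in $(\overline{h_{g(k-1)-1}})^{d_{k-1}+2}$, and since that word is a power of a primitive word, any factor of length $d_{k-1}|h_{g(k-1)-1}|$ equals $w^{d_{k-1}}$ for some conjugate $w$ of $\overline{h_{g(k-1)-1}}$; comparing last letters (or letter multiplicities when $d_{k-1}=1$) then forces $y_k=y_{k-1}$, a contradiction. This is exactly how the paper finishes Fact 2 in a few lines in its Case (i); in its Case (ii), where $y_{k-1}$ is new, one has $\overline{h_{g(k-1)-1}}=y_{k-1}u_{g(k-1)}$, so $u_{g(k)}$ is again a factor of $(\overline{h_{g(k-1)-1}})^{d_{k-1}+1}$ and the same primitivity argument applies. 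So the block-decomposition machinery you propose is unnecessary, and as written your proposal is incomplete precisely at its self-identified crux.
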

\begin{proof}
{We prove the result by induction on $k$. It is easily seen that
$z_1=x_1=y_1$. Now suppose that the result is true for any $j<k$.
Thus we have
\begin{align*}
z_1 z_2 z_3\cdots z_{k-1}&=y_1 \,\, y_1^{-1}
(\overline{h_{g(1)-1}})^{d_1} y_2 \, \, y_2^{-1}
(\overline{h_{g(2)-1}})^{d_2} y_3\, \cdots \, y_{k-2}^{-1}
(\overline{h_{g(k-2)-1}})^{d_{k-2}}y_{k-1}\\
&= (\overline{h_{g(1)-1}})^{d_1} (\overline{h_{g(2)-1}})^{d_2}
\cdots \overline{h_{g(k-2)-1}})^{d_{k-2}}y_{k-1}\\
&=u_{g(k-1)}y_{k-1}
\end{align*}
We should conclude
$z_k=y_{k-1}^{-1}(\overline{h_{g(k-1)-1}})^{d_{k-1}}y_k$. For this purpose, the two following facts should be proved.\\

{\bf Fact 1.} $y_{k-1}^{-1}(\overline{h_{g(k-1)-1}})^{d_{k-1}}\prec u_{g(k)}x_1^{-1}$.\\

{\bf Fact 2.} $y_{k-1}^{-1}(\overline{h_{g(k-1)-1}})^{d_{k-1}} y_k \nprec u_{g(k)}$.\\

We prove these facts in two cases.\\

{\bf Case (i).} Suppose that $y_{k-1}=\alpha$ has already appeared in $\Delta({\bf s})$.
By Lemma \ref{hu} (i), $\overline{h_{g(k-1)-1}}\rhd u_{g(k-1)}$ hence
$$y_{k-1}^{-1}(\overline{h_{g(k-1)-1}})^{d_{k-1}}\rhd u_{g(k-1)}(\overline{h_{g(k-1)-1}})^{d_{k-1}-1} $$
But the right side, is a prefix of $u_{g(k)}=u_{g(k-1)}(\overline{h_{g(k-1)-1}})^{d_{k-1}}$. This proves Fact 1.\\

To prove Fact(2), by contrary, suppose that
\begin{equation} \label{s1}
y_{k-1}^{-1}(\overline{h_{g(k-1)-1}})^{d_{k-1}} y_k \prec u_{g(k)}.\end{equation}

\noindent By Lemma \ref{deltaY}(iv), $u_{g(k-1)}\rhd (\overline{h_{g(k-1)-1}})^2$ so
\begin{equation} \label{s2}
u_{g(k)}=u_{g(k-1)}(\overline{h_{g(k-1)-1}})^{d_{k-1}}\rhd(\overline{h_{g(k-1)-1}})^{d_{k-1}+2}.
\end{equation}
From (\ref{s1}) and (\ref{s2}) we conclude
\begin{equation*}
y_{k-1}^{-1}(\overline{h_{g(k-1)-1}})^{d_{k-1}} y_k\prec (\overline{h_{g(k-1)-1}})^{d_{k-1}+2},\end{equation*}
which implies that $y_{k-1}^{-1}(\overline{h_{g(k-1)-1}})^{d_{k-1}}y_k =w^{d_{k-1}}$ for some $w\sim \overline{h_{g(k-1)-1}}$,\, but this is possible only if $y_{k-1}=y_k$ which is a contradiction. Hence, Fact 2 is proved in this case.\\

{\bf Case (ii).} Suppose that $y_{k-1}=\alpha$ has not appeared before in $\Delta(\bf s)$, hence,
\begin{equation}\label{hgyug}
\overline{h_{g(k-1)-1}}=y_{k-1} u_{g(k-1)}
\end{equation}
Thus Fact 1 is required as follows
$$y_{k-1}^{-1}(\overline{h_{g(k-1)-1}})^{d_{k-1}}=u_{g(k-1)}(y_{k-1} u_{g(k-1)})^{d_{k-1}-1}\lhd u_{g(k-1)}(\overline{h_{g(k-1)-1}})^{d_{k-1}}x_1^{-1}=u_{g(k)}x_1^{-1}.$$
In order to prove Fact 2, suppose by contrary that
\begin{equation}\label{s1.2}
y_{k-1}^{-1}(\overline{h_{g(k-1)-1}})^{d_{k-1}} y_k \prec u_{g(k)}\end{equation}
On the other hand, by (\ref{hgyug}) and $u_{g(k)}=u_{g(k-1)}(\overline{h_{g(k-1)-1}})^{d_{k-1}}$, we obtain
\begin{equation} \label{sss} u_{g(k)}=u_{g(k-1)} (y_{k-1}  u_{g(k-1)})^{d_{k-1}}\prec (y_{k-1}  u_{g(k-1)})^{d_{k-1}+1}=(\overline{h_{g(k-1)-1}})^{d_{k-1}+1}\end{equation}
From (\ref{s1.2}) and (\ref{sss}) we provide
$$ y_{k-1}^{-1}(\overline{h_{g(k-1)-1}})^{d_{k-1}} y_k \prec (\overline{h_{g(k-1)-1}})^{d_{k-1}+1}$$
which implies that $y_{k-1}^{-1}(\overline{h_{g(k-1)-1}})^{d_{k-1}}y_k=w^{d_{k-1}}$ for some $w\sim \overline{h_{g(k-1)-1}}$, but this is possible only if $y_{k-1}=y_k$, which is a contradiction. This ends the proof.
}
\end{proof}

\section{c-factorization}

\begin{thm} \label{cFactoEpi} Let ${\bf s}$ be an episturmian word with directive word
$\Delta({\bf s})=  x_1 x_2 x_3 \ldots= {y_1}^{d_1} {y_2}^{d_2} {y_3}^{d_3} \ldots$\, , where $x_i,y_i\in {\mathcal A}$ and $y_i\neq y_{i+1}$, for
all $i\geq 1$. If $c(s)=(c_1, c_2, \ldots)$, then
there exist integers $i$ and $j$ such that $c_1\cdots c_k=u_{g(k-j+i+1)}$ for any $k\geq j$. Consequently, we obtain
$c_k=(\overline{h_{g(k-j+i)-1}})^{d_{k-j+i}}$, for all $k\geq j$.
\end{thm}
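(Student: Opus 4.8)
The final ``consequently'' assertion is a purely formal consequence of the main identity: once $c_1\cdots c_k=u_{g(k-j+i+1)}$ is known for all $k\geq j$, applying it for $k$ and for $k-1$ and invoking Lemma \ref{deltaY}(i) gives
$$c_k=(c_1\cdots c_{k-1})^{-1}(c_1\cdots c_k)=u_{g(m)}^{-1}u_{g(m+1)}=(\overline{h_{g(m)-1}})^{d_m},$$
where $m=k-j+i$. So the whole task is to establish $c_1\cdots c_k=u_{g(k-j+i+1)}$, which I would prove by induction on $k$. For the base case I would track the first few c-factors explicitly: the factors created at the first occurrence of each letter appearing in ${\bf s}$ are fresh single letters, and after these initial irregularities the factorization locks onto the block pattern. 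A finite inspection of this initial segment fixes the shift constants $i,j$ and yields $c_1\cdots c_j=u_{g(i+1)}$; since the statement only asserts the \emph{existence} of such $i,j$, this suffices.

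For the inductive step assume $c_1\cdots c_{k-1}=u_{g(m)}$ with $m=k-j+i$, so the unfactored tail of ${\bf s}$ begins with $(\overline{h_{g(m)-1}})^{d_m}(\overline{h_{g(m+1)-1}})^{d_{m+1}}\cdots$. Writing $H=\overline{h_{g(m)-1}}$, I must show that the next c-factor is exactly $H^{d_m}$. The first half is the \emph{repetition} requirement: by Lemma \ref{deltaY}(iv) we have $u_{g(m)}\rhd H^{2}$, hence
\begin{equation*}
u_{g(m+1)}=u_{g(m)}H^{d_m}\rhd H^{d_m+2}.
\end{equation*}
Therefore $H^{d_m}$ reoccurs inside $u_{g(m+1)}$ shifted back by one period, i.e. starting strictly before position $|u_{g(m)}|$, so $H^{d_m}$ is a genuine repeated prefix of the tail and is a legitimate candidate for $c_k$.

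The harder, \emph{maximality}, half is to show that no longer prefix repeats, i.e. that $H^{d_m}y_{m+1}$ does not occur in ${\bf s}$ beginning before position $|u_{g(m)}|$. Here I first record that the letter immediately following the block is $y_{m+1}$: since each $\psi_a$ preserves the last letter of its argument, $\mu_{g(m+1)-1}(y_{m+1})$ ends in $y_{m+1}$, so $\overline{h_{g(m+1)-1}}$ begins with $y_{m+1}$; the same computation shows $H$ begins with $y_m$. Now $H$ is primitive (Proposition 2.8 of \cite{juspiri} applied to $h_{g(m)-1}$), so by the synchronization property of primitive words the only occurrences of $H^{d_m}$ inside the suffix $H^{d_m+2}$ are the period-aligned ones, and each non-terminal aligned occurrence is followed by the first letter $y_m\neq y_{m+1}$. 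Assuming for contradiction an occurrence of $H^{d_m}y_{m+1}$ starting before $|u_{g(m)}|$, I would combine $u_{g(m+1)}\rhd H^{d_m+2}$ with the assumed occurrence exactly as in Fact 2 of the proof of Theorem \ref{zFactoEpi}: this forces $H^{d_m}y_{m+1}$ to be a power $w^{d_m}$ with $w$ a conjugate of $H$, which is compatible only if $y_m=y_{m+1}$, contradicting $y_m\neq y_{m+1}$.

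The main obstacle is precisely this maximality step: one must exclude occurrences of the primitive power $H^{d_m}$ lying in the \emph{long} prefix $u_{g(m)}$ rather than only in the local suffix $H^{d_m+2}$, and show that none of them can be followed by $y_{m+1}$. Controlling all such occurrences is where the primitivity of $\overline{h_{g(m)-1}}$ together with the run structure of $u_{g(m+1)}$ must be used; fortunately the contradiction required is essentially the one already derived for the z-factorization, so the proof of Theorem \ref{zFactoEpi} can be reused almost verbatim at this point.
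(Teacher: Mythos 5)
Your inductive step is essentially the paper's own argument: the repetition half (an earlier, period-shifted occurrence of $H^{d_m}$ inside $u_{g(m+1)}$) follows from the suffix relations of Lemma \ref{deltaY}, and the maximality half embeds $u_{g(m+1)}$ as a suffix of $H^{d_m+2}$ via Lemma \ref{deltaY}(iv) and uses primitivity of $H=\overline{h_{g(m)-1}}$ to force every non-terminal occurrence of $H^{d_m}$ to be period-aligned, hence followed by the first letter $y_m\neq y_{m+1}$. That is exactly Facts 1 and 2 in the paper. One slip there: your closing sentence, which claims the contradiction arises ``exactly as in Fact 2 of Theorem \ref{zFactoEpi}'' by writing $H^{d_m}y_{m+1}=w^{d_m}$ for a conjugate $w$ of $H$, cannot be literally right, since $|H^{d_m}y_{m+1}|=d_m|H|+1$ is not a multiple of $|H|$; in Theorem \ref{zFactoEpi} the candidate was $y_{k-1}^{-1}(\overline{h_{g(k-1)-1}})^{d_{k-1}}y_k$, whose length is exactly $d_{k-1}|H|$, so the conjugate-power argument applied there but does not transfer verbatim. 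Your synchronization sentence immediately before is the correct (and sufficient) replacement, and it is what the paper actually uses.

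The genuine gap is the base case. ``A finite inspection of this initial segment'' is not an argument: what has to be proved is that the c-factorization ever aligns \emph{exactly} with some palindromic prefix $u_{g(\cdot)}$, and this is not automatic. The fresh-letter observation only yields $c_1\cdots c_{j-2}=u_{g(i)}$ and $c_{j-1}=y_i$, where $i$ is the block at which the last letter of the alphabet first enters $\Delta({\bf s})$; at that moment the processed prefix is $u_{g(i)}y_i$, which is \emph{not} of the form $u_{g(n)}$, so your induction has no starting point yet. The ``locking on'' is precisely the claim $c_j=y_i^{-1}(\overline{h_{g(i)-1}})^{d_i}$, giving $c_1\cdots c_j=u_{g(i+1)}$, and it needs a dedicated argument that your inductive step cannot supply: there the processed prefix is assumed to be $u_{g(m)}$ and the tools (Lemma \ref{hu}(i), alignment of occurrences of the primitive word $\overline{h_{g(m)-1}}$) rest on $y_m$ having occurred earlier in the directive word, whereas $y_i$ is fresh. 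The paper closes this step separately: with $\overline{h_{g(i)-1}}=y_iu_{g(i)}$, the word $w=u_{g(i)}(y_iu_{g(i)})^{d_i-1}$ occurs both as a prefix and as a suffix of $u_{g(i+1)}=u_{g(i)}(y_iu_{g(i)})^{d_i}$, and $wy_{i+1}\nprec u_{g(i+1)}$ because $y_i\notin Alph(u_{g(i)})$ forces any occurrence of $w$ to align the occurrences of the letter $y_i$, so the letter following it can only be $y_i\neq y_{i+1}$. Until you supply this (or an equivalent) lock-on argument, the existence of the constants $i,j$ — the actual content of the theorem's ``there exist'' — is unproved.
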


\begin{proof}{
Let $i= \min \{t: \{y_1, y_2, \ldots, y_t\} = \{1, 2, \ldots, k\}\}$ and $y_i=\alpha$.
Since $y_{i}=\alpha$ has no occurrence in $u_{g(i)}$, we have $u_{g(i)+1}=u_{g(i)}\alpha u_{g(i)}$, hence there exists
an integer $j\geq 3$ satisfying $c_1\cdots c_{j-2}=u_{g(i)}$ and $c_{j-1}=y_i$. Moreover, by Lemma \ref{Pn}(i), we have
\begin{align}
\overline{h_{g(i)-1}}&=y_{i} u_{g(i)}\\
u_{g(i+1)}&=u_{g(i)}(y_{i} u_{g(i)})^{d_i}
\end{align}
Now we are going to prove that $c_j=y_i^{-1}(\overline{h_{g(i)-1}})^{d_i}$. Denote the right side by $w$ and note that
$$c_1\cdots c_{j-1}w=u_{g(i)}\, y_i \, y_i^{-1}(\overline{h_{g(i)-1}})^{d_i}=u_{g(i+1)}$$
It is clear that $w=u_{g(i)}(y_i u_{g(i)})^{d_i-1}$ has at least two occurrences in $c_1\cdots c_{j-1}w=u_{g(i)}(y_{i} u_{g(i)})^{d_i}$. Thus it is enough to prove that $w y_{i+1}\nprec u_{g(i+1)}$. By contrary, suppose that $w y_{i+1}\prec u_{g(i+1)}$ so
\begin{equation}\label{wyu} u_{g(i)}(y_i u_{g(i)})^{d_{i}-1} y_{i+1}\prec u_{g(i)} (y_i u_{g(i)})^{d_i}
\end{equation}
Since $y_i\notin Alph(u_{g(i)})$, (\ref{wyu}) can happen only if $y_{i+1}=y_i$ which is a contradiction. Thus $c_j=w$ as required.\\

Now we claim that the following equation
\begin{equation}\label{cProd}
c_1 c_2\cdots c_k=u_{g(k-j+i+1)}
\end{equation}
holds for any integer $k\geq j$. The statement is true
for $\ell=j$ by the above arguments. We proceed by induction on $k$.
Suppose that $k>j$ and that $c_1 c_2\cdots c_\ell=u_{g(\ell-j+i+1)}$ holds for any integer $\ell$ with $j \leq \ell <k$.
By Lemma \ref{deltaY} (i), it is enough to show that $c_k = (\overline{h_{g(k-j+i)-1}})^{d_{k-j+i}}$.
For this, the two following facts should be proved\\
Fact 1.  $(\overline{h_{g(k-j+i)-1}})^{d_{k-j+i}} \prec u_{g(k-j+i+1)} x_1^{-1}$  \\
Fact 2.   $(\overline{h_{g(k-j+i)-1}})^{d_{k-j+i}}y_{k-j+i+1} \nprec u_{g(k-j+i+1)} $  \\

By Lemma \ref{hu} (i), $\overline{h_{g(k-j+i)-1}} \rhd u_{g(k-j+i)}$, we provide
\begin{equation*}
(\overline{h_{g(k-j+i)-1}})^{d_{k-j+i}} \rhd u_{g(k-j+i)} (\overline{h_{g(k-j+i)-1}})^{d_{k-j+i}-1},
\end{equation*}
which together with $u_{g(k-j+i)} (\overline{h_{g(k-j+i)-1}})^{d_{k-j+i}-1} \prec u_{g(k-j+i+1)} x_1^{-1}$ proves
 Fact 1.\\

 To prove Fact 2, suppose by contrary that $(\overline{h_{g(k-j+i)-1}})^{d_{k-j+i}} y_{k-j+i+1} \prec u_{g(k-j+i+1)}$. By using Lemma \ref{deltaY} (iv), this concludes that

$$(\overline{h_{g(k-j+i)-1}})^{d_{k-j+i}} y_{k-j+i+1} \prec (\overline{h_{g(k-j+i)-1}})^{d_{k-j+i}+2}$$

Since $h_t$ is primitive, it has just $d_{k-j+i}+2$ occurrences in the right side; thus the last relation implies that
$y_{k-j+i+1}$ equals the first letter of $\overline{h_{g(k-j+i)-1}}$, i.e. $y_{k-j+i+1}= y_{k-j+i}$ which is a contradiction.
}\end{proof}

{\bf Remark 1.} By slight modification of the argument used in the proof of Theorem \ref{cFactoEpi}, we find that c-factorization of a standard episturmian word is as follows: $c_1=y_1$ and
$$c_2=\left\{
  \begin{array}{ll}
    {y_2} & \hbox{if $d_1=1$,} \\
    {y_1^{d_1-1}} & \hbox{otherwise.}
  \end{array}
\right.$$
For any integer $m\geq 2$, there exists an integer $n$ such that $c_1 c_2\cdots c_m=u_{g(n)}\alpha_n$, where either $\alpha_n=\epsilon$ or $\alpha_n=y_n$. In addition, the next factor, $c_{m+1}$, is given by
$$c_{m+1}=\left\{
  \begin{array}{ll}
    {y_n} & \hbox{if $\alpha_n=\epsilon$ and $y_n\not \in \{y_1,\cdots,y_{n-1}\}$,} \\
    {(\overline{h_{g(n)-1}})^{d_n}} & \hbox{if $\alpha_n=\epsilon$ and $y_n\in \{y_1,\cdots,y_{n-1}\}$,} \\
    {y_n^{-1}}(\overline{h_{g(n)-1}})^{d_n} & \hbox{otherwise, i.e. if $\alpha_n=y_n$.}
  \end{array}
\right.$$

It is concluded that if $\alpha_n=\epsilon$ and $y_n\not \in \{y_1,\cdots,y_{n-1}\}$, then $c_1\cdots c_{m+1}=u_{g(n)}y_n$; otherwise $c_1\cdots c_{m+1}=u_{g(n+1)}$. Moreover, setting $k_0=Alph({\bf s})$, it is provided that the values $i$ and $j$ in Theorem \ref{cFactoEpi}, satisfy the following equation.
$$j-i=\left\{
  \begin{array}{ll}
    {k_0-1} & \hbox{if $d_1=1$,} \\
    {k_0} & \hbox{otherwise.}
  \end{array}
\right.$$

{\bf Remark 2.} Considering Theorem \ref{zFactoEpi} , Theorem \ref{cFactoEpi} and Remark 1, we conclude that from a point on, the formula $z_k=y_{k-1}^{-1} c_{k+k_0-1-m} y_k$ holds, where $k_0=|Alph({\bf s})|$ and

$$m=\left\{
  \begin{array}{ll}
    {1} & \hbox{if $d_1=1$,} \\
    {0} & \hbox{otherwise.}
  \end{array}
\right.$$

{\bf Remark 3.} From Theorem \ref{cFactoEpi}, by using Remark 1, we obtain that from a point on, $c_k=(\overline{h_{g(k-k_0+m)-1}})^{d_{k-k_0+m}}$, where $k_0$ and $m$ are defined as above.
Now if ${\bf s}$ is standard Sturmian, by using definition and notation of Chapter 2 of \cite{Lot2} about standard words and Sturmian words, it is easily proved that $h_{g(p)-1}=s_{p-1}$, for any integer $p\geq 1$. So in this case, by replacing $k_0=2$, we conclude  that from a point on, $c_k=(\overline{s_{k+m-3}})^{d_{k+m-2}}$. Thus in case $d_1>1$ (resp. $d_1=1$) by calculating the first four factors (resp. first three factors), we conclude Theorem 1 of \cite{Bersavelli} about c-factorization of standard Sturmian words.

\end{document}